\newtheorem{theorem}{Theorem}
\newtheorem{lemma}{Lemma}
\newtheorem{definition}{Definition}
\newtheorem{fact}{Fact}
\newtheorem{observation}{Observation}
 \gdef\xxxmark{
   \expandafter\ifx\csname @mpargs\endcsname\relax 
     \expandafter\ifx\csname @captype\endcsname\relax 
       \marginpar{xxx}
     \else
       xxx 
     \fi
   \else
     xxx 
   \fi}
 \gdef\xxx{\@ifnextchar[\xxx@lab\xxx@nolab}
 \long\gdef\xxx@lab[#1]#2{{\bf [\xxxmark #2 ---{\sc #1}]}}
 \long\gdef\xxx@nolab#1{{\bf [\xxxmark #1]}}
\newcommand{\EDIT}{\mbox{EDIT}}
\newcommand{\CG}{\mbox{CG}}
\newcommand{\VG}{\mbox{VG}}
\newcommand{\PAT}{\mbox{PATTERN}}
\newcommand{\OVT}{\mbox{Orthogonal Vectors Problem}}
\newcommand{\iM}{i \in [d]}
\title{Edit Distance Cannot Be Computed\\ in Strongly Subquadratic Time\\ (unless SETH is false)\footnote{A preliminary version of this paper appeared in \emph{Proceedings of the Forty-Seventh Annual ACM Symposium on Theory of Computing}, 2015.}}
\date{}
\author{
	Arturs Backurs\footnote{\texttt{backurs@mit.edu}}\\ MIT 
	\and Piotr Indyk\footnote{\texttt{indyk@mit.edu}}\\ MIT
	}
\begin{document}

\begin{titlepage}
\clearpage\maketitle
\thispagestyle{empty}
\begin{abstract}
The edit distance (a.k.a.\ the Levenshtein distance) between two strings is defined as the minimum number of insertions, deletions or substitutions of symbols needed to transform one string into another. The problem of computing the edit distance between two strings is a classical computational task, with a well-known algorithm based on dynamic programming. Unfortunately, all known algorithms for this problem run in nearly quadratic time.

In this paper we provide evidence that the near-quadratic running time bounds known for the problem of computing edit distance might be {tight}. Specifically, we show that, if the edit distance can be computed in time $O(n^{2-\delta})$ for some constant $\delta>0$, then the satisfiability of conjunctive normal form formulas with $N$ variables and $M$ clauses can be solved in time $M^{O(1)} 2^{(1-\epsilon)N}$ for a constant $\epsilon>0$. The latter result would violate the {\em Strong Exponential Time Hypothesis}, which postulates that such algorithms do not exist.
\end{abstract}
\end{titlepage}

\section{Introduction}

The edit distance (a.k.a. the Levenshtein distance) between two strings is defined as the minimum number of insertions, deletions or substitutions of symbols needed to transform one string into another. The distance and its generalizations have many applications in computational biology, natural language processing and information theory.  The problem of  computing the edit distance between two strings is a classical computational task, with a well-known algorithm based on the dynamic programming. Unfortunately, that algorithm runs in quadratic time, which is prohibitive for long sequences\footnote{For example, the analysis given in~\cite{estimate} estimates that aligning human and mouse genomes using this approach would take  about 95 years.}.
A considerable effort has been invested into designing faster algorithms, either by assuming that the edit distance is bounded, by considering the average case or by resorting to approximation\footnote{There is a rather large body of work devoted to edit distance algorithms and we will not attempt to list all relevant works here. Instead, we refer the reader to the survey~\cite{navarro2001guided} for a detailed overview of known exact and probabilistic algorithms, and to the recent paper~\cite{andoni2010polylogarithmic} for an overview of approximation algorithms.}. However, the fastest known exact algorithm, due to~\cite{masek1980faster}, has a running time of $O(n^2/\log^2 n)$ for sequences of length $n$, which is still nearly quadratic.

In this paper we provide evidence that the (near)-quadratic running time bounds known for this problem might, in fact, be {tight}. Specifically, we show that if the edit distance can be computed in time $O(n^{2-\delta})$ for some constant $\delta>0$, then the satisfiability of conjunctive normal form (CNF) formulas with $N$ variables and $M$ clauses can be solved in time $M^{O(1)} 2^{(1-\epsilon)N}$ for a constant $\epsilon>0$. The latter result would violate the {\em Strong Exponential Time Hypothesis (SETH)}, introduced in~\cite{impagliazzo2001complexity,impagliazzo2001problems},  which postulates that such algorithms do not exist\footnote{Technically, our results relies on an even weaker conjecture. See Preliminaries for more details.}. The rationale behind this hypothesis is that, despite decades of research on fast algorithms for satisfiability and related problems, no algorithm was yet shown to run in time  faster than $2^{N(1-o(1))}$. Because of this state of affairs, SETH has served as the basis for proving conditional lower bounds for several important computational problems, including  k-Dominating Set~\cite{PW}, the diameter of sparse graphs~\cite{RW},  local alignment~\cite{AWW}, dynamic connectivity problems~\cite{AV}, and the Frechet distance computation~\cite{Bring}.
Our paper builds on these works, identifying a new important  member of the class of ``SETH-hard'' problems. 

\paragraph{Our techniques and related work} 
This work has been stimulated by the recent result of Karl Bringmann~\cite{Bring}, who showed an analogous hardness result for computing the Frechet distance\footnote{Given two sequences of points $P_1$ and $P_2$, the Frechet distance between them is defined as the minimum, over all monotone traversals of $P_1$ and $P_2$, of the largest distance between the corresponding points at any stage of the traversal.},
 and listed SETH-hardness of edit distance as an open problem.  There are notable similarities between  the edit distance and the Frechet distance. In particular, both can be computed in quadratic time, via dynamic programming over an $n \times n$ table $T$ where each entry $T[i,j]$  holds the distance between the first $i$ elements of the first sequence and the first $j$ elements of the second sequence. Furthermore,  in both cases each entry $T[i,j]$ can be computed locally given $T[i,j-1]$, $T[i-1,j]$ and $T[i-1,j-1]$.  The key difference between the two distances is that while the recursive formula for the Frechet distance uses the max function, the formula for the edit distance involves the sum. As a result, the Frechet distance is effectively determined by a single pair of sequence elements, while the edit distance is determined by many pairs of elements. As we describe below, this makes the reduction to edit distance much more subtle.\footnote{This also means that our hardness argument does not extend to the approximate edit distance computation, in contrast to the argument in~\cite{Bring}.}

Our result is obtained by a reduction from the {\em Orthogonal Vectors Problem}, which is defined as follows. Given two sets $A,B\subseteq \{0,1\}^d$ such that $|A|=|B|=N$, the goal is to determine whether there exists $x \in A$ and $y \in B$ such that the dot product  $x \cdot y =\sum_{j=1}^d x_j y_j$ (taken over reals) is equal to $0$.   It is known~\cite{williams2005new} that an $O(d^{O(1)} \cdot N^{2-\delta})$-time algorithm for the Orthogonal Vectors Problem would imply that SETH is false (even in the setting $d=\omega(\log n)$). Therefore, in what follows we focus on reducing the Orthogonal Vectors Problem to the Edit Distance problem.

The first step of our reduction mimics the approaches in~\cite{Bring} (as well as \cite{AWW}). In particular, each $x \in A$ and $y \in B$ is assigned a ``gadget'' sequence. Then, the gadget sequences for all $a \in A$ are concatenated together to form the first input sequence, and the gadget sequences for all $b \in B$ are concatenated to form the second input sequence. 
The correctness of the reduction is proven by showing that:
\begin{itemize}
\item If there is a pair of orthogonal vectors $x \in A$ and $y \in B$,  then one can traverse the two sequences in a way that the gadgets assigned to $x$ and $y$ are {\em aligned}, which implies that the distance induced by this traversal is ``small''.
\item If there is no orthogonal pair, then no such traversal exists, which implies that the distance induced by any traversal is ``large''.
\end{itemize}

The mechanics of this argument depends on the specific distance function. In the case of Frechet distance, the output value is determined by the maximum distance between the aligned elements, so it suffices to show that  the distance between two vector gadgets is smaller than $C$ if they are orthogonal and at least $C$ if they are not, for some value of $C$. In contrast, edit distance {\em sums up} the distances between {\em all} aligned gadgets (as well as the costs of insertions and deletions used to create the alignment), which imposes stronger requirements on the construction. Specifically, we need to show that if two vectors  $x$ and $y$ are not orthogonal, i.e.,  they have at least one overlapping $1$, then the distance between their gadgets is {\em equal} to $C$, not just at least $C$. Since we  need to ensure that the distance between two gadgets cannot grow  in the number of overlapping $1$s, our gadget design and analysis become more complex.

Fortunately, the edit distance is expressive enough to support this functionality. The basic idea behind the gadget construction is to use the fact that the edit distance between two gadget strings, say $VG_1$ (from the first sequence) and $VG_2$ (from the second sequence), is the minimum cost over all possible alignments between $VG_1$ and $VG_2$. 
Specifically, we construct gadgets that allow two alignment options. 
The first option results in a cost that is linear in the number of  overlapping $1$s of the corresponding vectors (this is easily achieved by using substitutions only). On the other hand, the second ``fallback'' option has a fixed cost (say $C_1$) that is slightly higher than the cost of the first option when no $1$s are overlapping (say, $C_0$). Thus,  by taking the minimum of these two options, the resulting  cost is equal to $C_0$ when the vectors are orthogonal and equal to $C_1$ $(>C_0)$ otherwise, which is what is needed. 
See Theorems~\ref{th_min_edit} and~\ref{th_edit_orth} for the details of the construction.

\paragraph{Further developments} Following this work, two recent publications showed multiple results demonstrating conditional hardness of the edit distance, the longest common subsequence problem (LCS), dynamic time warping  problem and other similarity measures between sequences \cite{followup1,followup2}. Among other results,~\cite{followup2} showed hardness of computing the edit distance over the {\em binary} alphabet, which improves over the alphabet size of $7$ required for our reduction. In another development, \cite{followup3} showed that the quadratic hardness of LCS and edit distance computation can be based on a weaker (and therefore more plausible) assumption than SETH, by replacing CNF formulas with more general circuits.  
\section{Preliminaries}

\paragraph{Edit distance}
For any two sequences $x$ and $y$ over an alphabet $\Sigma$, the edit distance $\EDIT(x,y)$ is equal to the minimum number of symbol insertions, symbol deletions or symbol substitutions needed to transform $x$ into $y$. It is well known that the $\EDIT$ function induces a metric; in particular, it is symmetric and satisfies the triangle inequality.

In the remainder of this paper we will use use an equivalent definition of $\EDIT$ that will make the analysis of our reductions more convenient. 

\begin{observation}
\label{no_insertion}
For any two sequences $x,y$, $\EDIT(x,y)$ is equal to the minimum, over all sequences $z$, of the number of deletions and substitutions needed to transform $x$ into $z$ and $y$ into $z$.
\end{observation}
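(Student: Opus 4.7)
The plan is to prove the observation by establishing both inequalities. For the direction $\EDIT(x,y) \ge \min_z(\cdots)$, I would fix an optimal edit sequence transforming $x$ into $y$ and use the alignment it induces: every position of $x$ is either deleted, matched to a position of $y$, or substituted into a position of $y$, and every position of $y$ is either inserted, matched, or substituted. Write $m$ for the number of matches, $s$ for the number of substitutions, $d$ for the number of deletions (of $x$-positions), and $i$ for the number of insertions (of $y$-positions), so that $\EDIT(x,y) = d + s + i$. Then I would define $z$ to be the subsequence of $x$ consisting of the characters at positions that are either matched or substituted; notice that choosing $z$ from the $x$-side rather than trying to pick characters symmetrically is what prevents substitutions from being paid for twice.

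With this $z$ in hand, $x$ can be turned into $z$ using $d$ deletions (one for each $x$-position that the alignment deletes), and $y$ can be turned into $z$ using $i$ deletions (one per inserted position) together with $s$ substitutions that replace $y$'s character at each substituted position by the corresponding character of $x$. The combined cost is $d + i + s = \EDIT(x,y)$, so the minimum over all $z$ is at most $\EDIT(x,y)$.

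For the reverse inequality, I would take any candidate $z$ together with sequences of deletions and substitutions that transform $x$ into $z$ and $y$ into $z$, and then \emph{reverse} the second sequence: the reverse of a deletion is an insertion, and the reverse of a substitution is another substitution (swapping the two characters). Concatenating the $x \to z$ operations with the reversed $z \to y$ operations produces a valid sequence of insertions, deletions, and substitutions that turns $x$ into $y$ using exactly the same total number of operations. Therefore $\EDIT(x,y)$ is bounded above by the minimum, completing the proof.

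The main thing to be careful about is the bookkeeping in the first direction: it is tempting to build $z$ by deleting from both $x$ and $y$ all the way down to the matched subsequence, which would count each substitution twice. Taking $z$ to retain the substituted characters on one side is the right trick. The other direction is essentially routine reversibility of edit operations, so I do not expect any real obstacle there.
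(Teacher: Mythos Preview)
Your proof is correct and follows essentially the same approach as the paper's: both hinge on the observation that an insertion on one side can be traded for a deletion on the other, with your version spelling out the alignment bookkeeping explicitly while the paper compresses the first direction into an appeal to the metric properties of $\EDIT$ before eliminating insertions.
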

\begin{proof}
It follows directly from the metric properties of the edit distance that $\EDIT(x,y)$ is equal to the minimum, over all sequences $z$, of the number of {\em insertions}, deletions and substitutions needed to transform $x$ into $z$ and $y$ into $z$.
Furthermore, observe that if, while transforming $x$,  we insert a symbol that is later aligned with some symbol of $y$, we can instead delete the corresponding symbol in $y$.
Thus, it suffices to allow deletions and substitutions only.
\end{proof}

\begin{definition}
	\label{pattern}
	We define the following similarity distance between sequences $P_1$ and $P_2$
	and we call it the pattern matching distance between $P_1$ and $P_2$.
	$$\PAT(P_1,P_2)=\min_{\substack{x\text{ is a contiguous}\\
					 \text{ subsequence of }P_2}}\EDIT(P_1,x).$$
\end{definition}

For a symbol $a$ and an integer $i$ we use $a^i$ to denote symbol $a$ repeated $i$ times. 

\paragraph{Orthogonal Vectors Problem} The {\em Orthogonal Vectors Problem} (OVP) is defined as follows: given  two sets $A,B\subseteq \{0,1\}^d$ such that $|A|=|B|=N$, determine whether there exists $x \in A$ and $y \in B$ such that the dot product  $x\cdot y =\sum_{j=1}^d x_j y_j$ (taken over reals) is equal to $0$.   
An alternative formulation of this problem is: given two collections of $N$ sets each, determine if there is a set in the first collection that does not intersect a set from the second collection.\footnote{Equivalently, after complementing sets from the second collection, determine if there is a set in the first collection that is contained in a set from the second collection.}

The Orthogonal Vectors Problem has an easy $O(N^2 d)$-time solution. 
The currently best known algorithm for this problem runs in time $n^{2-1/O(\log c)}$, where $c=d/\log n$~\cite{chan2016deterministic,abboud2015more}.
The Orthogonal Vector Conjecture~\cite{williams2005new,williams2015hardness} postulates that there is no strongly sub-quadratic\footnote{``Strongly sub-quadratic'' means $d^{O(1)}\cdot N^{2-\delta}$ for some constant $\delta>0$.} running time algorithm for OVP.
Moreover, it is known that any algorithm for this problem with strongly sub-quadratic running time would also yield a more efficient algorithm for CNF-SAT, breaking SETH~\cite{williams2005new}. Thus, in what follows, we focus on reducing the Orthogonal Vectors Problem to $\EDIT$.

\paragraph{Simplifying assumption} We assume that in the Orthogonal Vectors Problem, for all vectors $b \in B$, $b_1=1$, that is, the first coordinate of any vector $b \in B$ is equal to $1$. We can make this assumption w.l.o.g. because we can always add a $1$ to the beginning of each $b \in B$, and add a $0$ to the beginning of each $a \in A$.
 
\section{Reductions}

\subsection{Vector gadgets} \label{vc}
We now describe vector gadgets as well as provide some intuition behind the construction.

We will construct sequences over an alphabet $\Sigma=\{0,1,2,3,4\}$. 

We start by defining an integer parameter $l_0=1000 \cdot d$, where $d$ is the dimensionality of the vectors in the Orthogonal Vectors Problem.
We then define {\em coordinate gadget} sequences $\CG_1$ and $\CG_2$ as follows.
For integer $x \in \{0,1\}$ we define
$$
	\CG_1(x):=
	\begin{cases}
		2^{l_0}\,0\,1\,1\,1\,2^{l_0} & \text{if } x=0; \\
		2^{l_0}\,0\,0\,0\,1\,2^{l_0} & \text{if } x=1,
	\end{cases}
$$
$$
	\CG_2(x):=
	\begin{cases}
		2^{l_0}\,0\,0\,1\,1\,2^{l_0} & \text{if } x=0; \\
		2^{l_0}\,1\,1\,1\,1\,2^{l_0} & \text{if } x=1.
	\end{cases}
$$

The coordinate gadgets were designed so that they have the following properties.
For any two integers $x_1,x_2 \in \{0,1\}$, 
$$
	\EDIT(\CG_1(x_1),\CG_2(x_2))=
	\begin{cases}
		1 & \text{if } x_1 \cdot x_2=0; \\
		3 & \text{if } x_1 \cdot x_2=1.
	\end{cases}
$$

Further, we define another parameter $l_1=(1000 \cdot d)^2$. 
We use $\Sigma$-style notation to denote the concatenation of sequences. For example, given $d$ sequences $s_1, \ldots, s_d$, we denote the concatenation $s_1 \ldots s_d$ by $\bigcirc_{\iM}s_i$.
For vectors $a,a',b\in \{0,1\}^d$, we define the {\em vector gadget} sequences as
$$
	\VG_1(a,a')=Z_1 L(a) V_0 R(a') Z_2\text{ and }\VG_2(b)=V_1 D(b) V_2,
$$ 
where
$$
	V_1=V_2=V_0=3^{l_1}, \ \ Z_1=Z_2=4^{l_1},
$$ 
$$
	L(a)=\bigcirc_{\iM}\CG_1(a_i), \ \ R(a')=\bigcirc_{\iM}\CG_1(a'_i), \ \ D(b)=\bigcirc_{\iM}\CG_2(b_i).
$$ 
In what follows we skip the arguments of $L$, $R$ and $D$.
We denote the length of $L$, $R$ and $D$ by 
$l=|L|=|R|=|D|=d(4+2l_0)$.

We visualize the defined vector gadgets in Figure \ref{Vector_Gadgets}.

\paragraph{Intuition behind the construction} Before going into the analysis of the gadgets in Section \ref{gadget_properties}, we will first provide some intuition behind the construction. Given three vectors $a,a',b\in\{0,1\}^d$, we want that $\EDIT(\VG_1(a,a'),\VG_2(b))$ grows linearly in the minimum of $a \cdot b$ and $a' \cdot b$.
More precisely, we want that 
\begin{equation} \label{prop}
	\EDIT(\VG_1(a,a'),\VG_2(b))=C+t \cdot \min(a \cdot b,a' \cdot b),
\end{equation}
where the integers $C,t>0$ are functions of $d$ only. In fact, we will have that $t=2$. To realize this, we construct our vector gadgets $\VG_1$ and $\VG_2$ such that there are only two possibilities to achieve small edit distance. In the first case, the edit distance grows linearly in $a \cdot b$. In the second case, the edit distance grows linearly in $a' \cdot b$. Because the edit distance is equal to the minimum over all possible alignments, we take the minimum of the two inner products. After taking the minimum, the edit distance will satisfy the properties stated in (1). More precisely, we achieve the minimum edit distance cost between $\VG_1$ and $\VG_2$ by following one of the following two possible sequences of operations:
\begin{itemize}
	\item Case 1: Delete $Z_1$ and $L$. Substitute $Z_2$ with $V_2$. This costs $C':=l_1+d\cdot (2l_0+4)+l_1$. Transform $R$ and $D$ into the same sequence by transforming the corresponding coordinate gadgets into the same sequences. By the construction of the coordinate gadgets, the cost of this step is $d+2\cdot(a'\cdot b)$. Therefore, this case corresponds to edit distance cost $C'+d+2\cdot(a'\cdot b)=C+2\cdot(a'\cdot b)$.
	\item Case 2: Delete $R$ and $Z_2$. Substitute $Z_1$ with $V_1$. This costs $C'$. Transform $L$ and $D$ into the same sequence by transforming the corresponding coordinate gadgets. Similarly as before, the cost of this step is $d+2\cdot(a \cdot b)$. Therefore, this case corresponds to edit distance cost $C'+d+2\cdot(a\cdot b)=C+2\cdot(a\cdot b)$.
\end{itemize}
Taking  the minimum of these two cases yields the desired formula \eqref{prop}. 

In the reduction given in  Section \ref{reductionpat},  we ensure that the dot product $a' \cdot b$ is always equal to $1$. 
As a result we have that  $\EDIT(\VG_1(a),\VG_2(b))$ is small (equal to $C_0$) if the vectors $a$ and $b$ are orthogonal,  and is large (equal to $C_1$) otherwise. That is: 
\begin{equation} \label{prop2}
	\EDIT(\VG_1(a),\VG_2(b))=
	\begin{cases}
		C_0 & \text{if }a\cdot b=0\\
		C_1 & \text{otherwise}
	\end{cases}
\end{equation}
for $C_1>C_0$. This property is crucial for our construction, as it guarantees that the sum of several terms $\EDIT(\VG_1(a),\VG_2(b))$ is smaller than some threshold  if and only if $a \cdot b=0$ for at least one pair of vectors $a$ and $b$.
This enables us to detect whether such a pair exists. 
In contrast, this property would not hold if $\EDIT(\VG_1(a),\VG_2(b))$ depended linearly on the value of $a \cdot b$.

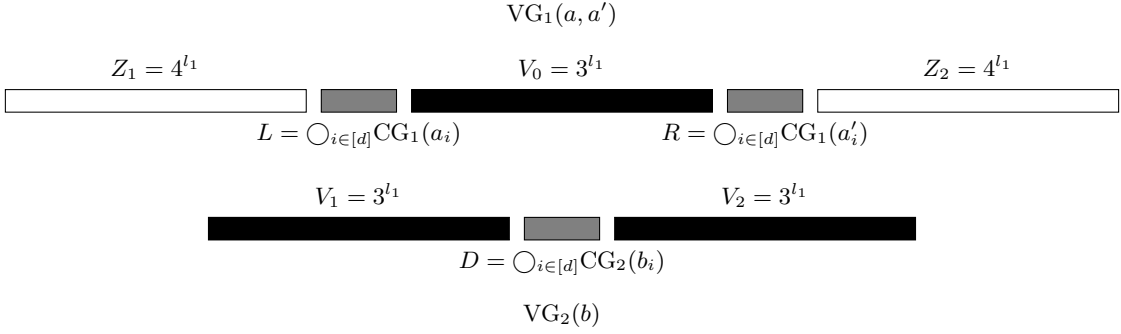
\begin{figure*}
        \centering
        \begin{tikzpicture} [solid][font=\footnotesize]
		\fill[white,draw=black] (-7.4,1) rectangle (-3.4,0.7);
		\fill[gray,draw=black] (-3.2,1) rectangle (-2.2,0.7);
		\fill[black,draw=black] (-2,1) rectangle (2,0.7);
		\fill[gray,draw=black] (3.2,1) rectangle (2.2,0.7);
		\fill[white,draw=black] (7.4,1) rectangle (3.4,0.7);
		
		\fill[black,draw=black] (-4.7,-1) rectangle (-0.7,-0.7);
		\fill[gray,draw=black] (-0.5,-1) rectangle (0.5,-0.7);
		\fill[black,draw=black] (4.7,-1) rectangle (0.7,-0.7);

		\node (s) at (-5.4,1.3) {$Z_1=4^{l_1}$};
		\node (s) at (-2.7,0.4) {$L=\bigcirc_{\iM}\CG_1(a_i)$};
		\node (s) at (0,1.3) {$V_0=3^{l_1}$};
		\node (s) at (2.7,0.4) {$R=\bigcirc_{\iM}\CG_1(a'_i)$};
		\node (s) at (5.4,1.3) {$Z_2=4^{l_1}$};
		\node (s) at (-2.7,-0.4) {$V_1=3^{l_1}$};
		\node (s) at (2.7,-0.4) {$V_2=3^{l_1}$};
		\node (s) at (0,-1.3) {$D=\bigcirc_{\iM}\CG_2(b_i)$};
		
		\node (s) at (0,2) {$\VG_1(a,a')$};
		\node (s) at (0,-2) {$\VG_2(b)$};
		
        \end{tikzpicture}
        \caption{\footnotesize A visualisation of the vector gadgets.
		A black rectangle denotes a run of $3$s, while a white rectangle denotes a run of $4$s. 
		A gray rectangle denotes a sequence that contains $0$s, $1$s and $2$s. A short rectangle denotes
		a sequence of length $l$, while a long one denotes a sequence of length $l_1$.} 
	\label{Vector_Gadgets}
\end{figure*}

\subsubsection{Properties of the vector gadgets} \label{gadget_properties}

\begin{theorem} \label{th_min_edit}
	For any vectors $a,a',b \in \{0,1\}^d$,
	\begin{equation*} \label{min_edit}
		\EDIT(\VG_1(a,a'),\VG_2(b))=2l_1+l+d+2\cdot\min\left(a\cdot b,a'\cdot b\right).
	\end{equation*}
\end{theorem}
\begin{proof} Follows from lemmas \ref{upper_bound} and \ref{lower_bound} below.
\end{proof}

\begin{lemma} \label{upper_bound}
	For any vectors $a,a',b \in \{0,1\}^d$,
	\begin{equation*}
		\EDIT(\VG_1(a,a'),\VG_2(b))\leq 2l_1+l+d+2\cdot\min\left(a\cdot b,a'\cdot b\right).
	\end{equation*}
\end{lemma}
\begin{proof}
	W.l.o.g., $a\cdot b\leq a' \cdot b$. We delete $R$ and $Z_2$ from $\VG_1(a,a')$. This costs $l_1+l$.
	We transform $Z_1LV_0$ into $V_1DV_2$ by using substitutions only. This costs $l_1+d+2\cdot(a\cdot b)$. We get the upper bound on the $\EDIT$ cost and this finishes the proof.
\end{proof}

\begin{lemma} \label{lower_bound}
	For any vectors $a,a',b \in \{0,1\}^d$,
	\begin{equation*}
		\EDIT(\VG_1(a,a'),\VG_2(b))\geq 2l_1+l+d+2\cdot\min\left(a\cdot b,a'\cdot b\right)=:X.
	\end{equation*}
\end{lemma}
\begin{proof}
	Consider an optimal transformation of $\VG_1(a,a')$ and $\VG_2(b)$ into 
	the same sequence. Every symbol (say $s$)  in the first sequence is either
	substituted, preserved or deleted in the process. If a symbol is not
	deleted but instead is preserved or substituted by another symbol (say $t$), 
	we say that $s$ is {\em aligned} with $t$, or that $s$ and $t$ have an  {\em alignment}.

	We state the following fact without a proof.

	\begin{fact}
		Suppose we have two sequences $x$ and $y$ of symbols.
		Let $i_1<j_1$ and $i_2<j_2$ be four positive integers.
		If $x_{i_1}$ is aligned with $y_{j_2}$, then
		$x_{j_1}$ cannot be aligned with $y_{i_2}$.
	\end{fact} 
	
	From now on we proceed by considering three cases.

	{\bf Case 1.} The subsequence $D$ has alignments with both $Z_1L$ \emph{and} $RZ_2$. 
	In this case, the cost induced by symbols from $Z_1$ and $Z_2$, and $V_0$ is
	$l_1$ for each one of these sequences because the symbols must be deleted or substituted.
	This implies that $\EDIT(\VG_1(a),\VG_2(b))\geq 3l_1$,
	which contradicts an easy upper bound. 
	We have an upper bound $\EDIT(\VG_1(a),\VG_2(b))\leq 2l_1+3l$, which is obtained by 
	deleting $L$, $R$, $D$, $Z_1$ and replacing $Z_2$ with $V_2$ symbol by symbol.
	Remember that $l_0=1000\cdot d$ and $l_1=(1000\cdot d)^2$, and $l=d(4+l_0)$. Thus, $l_1\geq 3l$ and the lower bounds contradicts the upper bound.
	Therefore, this case cannot occur.

	{\bf Case 2.} $D$ does not have any alignments with $Z_1L$. We will show that, if this case happens, then
	$$
		\EDIT(\VG_1(a,a'),\VG_2(b))\geq 2l_1+l+d+2\cdot\left(a'\cdot b\right).
	$$
	
	We start by introducing the following notion. Let $v$ and $z$ be two sequences that decompose as $v=x V $ and $z=y Z$. Consider two sequences ${\cal T}$ and ${\cal R}$ of deletions and substitutions that transform $v$ into $u$ and $z$ into $u$, respectively. An operation in ${\cal T}$ or ${\cal R}$ is called {\em internal to} $V$ and $Z$ if it is either a (1) deletion of a symbol in $V$ or $Z$, or (2) a substitution of a symbol in $V$ so that it aligns with a symbol in $Z$, or vice versa. All other operations, including substitutions that align with  symbols in $V$ ($Z$, resp.) to those outside of $Z$ ($V$, resp.) are called {\em external to} $V$ and $Z$.

	We state the following  fact without a proof.
	\begin{fact}
		\label{suffixes}
		Let $xV$ and $yZ$ be sequences such that $V=4^t$, $Z=3^t$ and
		$x$ and $y$ are arbitrary sequences over an arbitrary alphabet not including $3$ or $4$.
		Consider $\EDIT(xV,yZ)$ and the corresponding operations minimizing the distance.
		Among those operations, the number of operations that are internal to $V$ and $Z$ is at least $t$.
	\end{fact}
	
	Given that $|Z_2|=|V_2|=l_1$ and $Z_2$ consists only of $4$s
	and $V_2$ consists of only $3$s, Fact \ref{suffixes} 
	implies that the number of operations that are internal to $Z_2$ and $V_2$ is at least $S_1:=l_1$.
	
	Because $D$ does not have any alignments with $Z_1L$, we must have that every symbol in $Z_1L$ gets deleted or substituted.
	Thus, the total contribution from symbols in $Z_1L$ to an optimal alignment is $S_2:=|Z_1L|=l_1+l$. Now we will lower bound the contribution to an optimal alignment from symbols in sequences $R$ and $D$. First, observe that both $R$ and $D$ have $d$ runs of $1$s. We consider the following two sub-cases.
	
	{\bf Case 2.1.}	There exist $i,j \in [d]$ with $i \neq j$ such that the $i$th run in $D$ has alignments with the $j$th run in $R$. The number of symbols of type $2$ to the right of the $i$th run in $D$ and the number of symbols of type $2$ to the right of the $j$th run in $R$ differ by at least $2l_0$. Therefore, the induced $\EDIT$ cost of symbols of type $2$ in $R$ and $D$ is at least 
	$2l_0\geq d+2\cdot\left(a'\cdot b\right)=:S_{3}$, from which we conclude that
	\begin{align*}
		&\EDIT(\VG_1(a,a'),\VG_2(b))\geq S_{1}+S_2+S_3\\
		= \ & l_1+(l_1+l)+\left(d+2\cdot\left(a'\cdot b\right)\right)=X.
	\end{align*}
	
	In the inequality we used the fact that the contributions from $S_1$, $S_2$ and $S_3$ are disjoint. This follows from the definitions of the quantities. 
	More precisely, the contribution from $S_1$ comes from operations that are \emph{internal} to $V_2$ and $Z_2$. Thus, it remains to show that the contributions from $S_2$ and $S_3$ are disjoint. This follows from the fact that the contribution from $S_2$ comes from symbols $Z_1L$ and the assumption that $D$ does not have any alignments with $Z_1L$.

	{\bf Case 2.2.} (The complement of Case 2.1.) Consider any $i \in [d]$.
	If a symbol of type $1$ from the $i$th run in $D$ is aligned with a symbol of type $1$ in $R$, then the symbol of type $1$ comes from the $i$th run in $R$.
	Define the set $P$ as the set of all numbers $i \in [d]$ such that the $i$th run of $1$s in $D$ has alignment with the $i$th run of $1$s in $R$.
	
	For all $i\in P$, the $i$th run in $R$ aligns with 
	the $i$th run in $D$.
	By the construction of coordinate gadgets, the $i$th run in $R$ and $D$ incur
	$\EDIT$ cost $\geq 1+2a'_ib_i$.
	
	For all $i \not\in P$, the $i$th run in $D$ incurs $\EDIT$ cost at least $2$ (since there are at least two symbols of type $1$).
	Similarly, the $i$th run in $R$ incurs $\EDIT$ cost at least $1$ (since there is at least one symbol of type $1$).
	Therefore, for every $i \not\in P$, the $i$th run in $R$ and $D$ incur $\EDIT$ cost $\geq 1+2 \geq 1+2a'_ib_i$.
	
	We get that the total contribution to the $\EDIT$ cost from the $d$ runs in $D$ and the $d$ runs in $R$ is
	$$
		\sum_{i\in P}\left(1+2a'_ib_i\right)+\sum_{i\in [d]\setminus P}3\geq\sum_{i=1}^d\left(1+2a'_ib_i\right)=d+2\cdot\left(a'\cdot b\right)=:S_4.
	$$
	We conclude:
	\begin{align*}
		&\EDIT(\VG_1(a,a'),\VG_2(b))\geq S_{1}+S_2+S_4\\
		= \ & l_1+(l_1+l)+\left(d+2\cdot\left(a'\cdot b\right)\right)=X.
	\end{align*}
	
	We used the fact that the contributions from $S_1$, $S_2$ and $S_4$ are disjoint. The argument is analogous as in the previous case.
	
	{\bf Case 3.} The symbols of $D$ are not aligned with any symbols in $RZ_2$. If this case happens, then
	$$
		\EDIT(\VG_1(a,a'),\VG_2(b))\geq 2l_1+l+d+2\cdot\left(a\cdot b\right).
	$$
	The analysis of this case is analogous to the analysis of Case 2. More concretely, for any sequence $x$, define $\text{reverse}(x)$ to be the sequence $y$ of length $|x|$ such that $y_i=x_{|x|+1-i}$ for all $i=1,2,\ldots,|x|$. Now we repeat the proof in Case 2 but for 
	$$
		\EDIT(\text{reverse}(\VG_1(a,a')),\text{reverse}(\VG_2(b))).
	$$
	This yields exactly the lower bound that we need.
	
	The proof of the lemma follows. We showed that Case 1 cannot happen. By combining lower bounds corresponding to Cases 2 and 3, we get the lower bound stated in the lemma.
\end{proof}

We set $a':=1\,0^{d-1}$, that is, $a'$ is a binary vector of length $d$ such that $a'_1=1$ and $a'_i=0$ for $i=2,\ldots,d$. We define
$$
	VG_1(a):=VG_1(a,a').
$$

\begin{theorem} \label{th_edit_orth}
	Let $a\in \{0,1\}^d$ be any binary vector and $b \in \{0,1\}^d$ be any binary vector that starts with $1$, that is, $b_1=1$. Then,
	\begin{equation*} \label{edit_orth}
		\EDIT(\VG_1(a),\VG_2(b))=
		\begin{cases}
			E_s:=2l_1+l+d & \text{if }a\cdot b=0; \\
			E_u:=2l_1+l+d+2 & \text{if }a\cdot b\geq 1.
		\end{cases}
	\end{equation*}
\end{theorem}
\begin{proof}
	Follows from Theorem \ref{th_min_edit} by setting $a'=1\,0^{d-1}$ and observing that $a' \cdot b=1$ because $b_1=1$.
\end{proof}

\subsection{Reducing the $\OVT$ to $\PAT$} \label{reductionpat}
We proceed by concatenating vector gadgets into sequences. 

We note that the length of the vector gadgets produced by $\VG_1$ depends on the dimensionality $d$ of the vectors but not on the entries of the vectors.
The same is true about $\VG_2$. We set $t$ to be the maximum of the two lengths. Furthermore, we set $T=1000d\cdot t=\Theta(d^3)$.
We define $\VG_k'(a)=5^{T}\VG_k(a)5^{T}$ for $k\in \{1,2\}$.
Let $f=1^d$ be a vector consisting of $d$ entries equal to $1$.

Let $A$ and $B$ be sets from the $\text{Orthogonal Vectors}$ instance. By definition $|A|=|B|$.

We define sequences
$$
	P_1=\bigcirc_{a\in A}\VG_1'(a),
$$
$$
	P_2=\left(\bigcirc_{i=1}^{|A|-1}\VG_2'(f)\right)
		\left(\bigcirc_{b\in B}\VG_2'(b)\right)
		\left(\bigcirc_{i=1}^{|A|-1}\VG_2'(f)\right).
$$

\begin{theorem}
	Let $X:=|A|\cdot E_u$.
	If there are two orthogonal vectors, one from set $A$, another from set $B$, then $\PAT(P_1,P_2)\leq X-(E_u-E_s)$; otherwise we have $\PAT(P_1,P_2)= X$. 
\end{theorem}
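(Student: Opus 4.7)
For the upper bound, I slide a window of $|A|$ consecutive $\AG_2'$ gadgets along $P_2$; the $|A|-1$ copies of $\AG_2'(f)$ padding each end of $P_2$ guarantee that for every $i^{*} \in [|A|]$ and $j^{*} \in [|B|]$ there is a window in which $\AG_1'(a_{i^{*}})$ lines up with $\AG_2'(b_{j^{*}})$. Let $Q$ be the substring of $P_2$ cut out by this window; aligning each $2^T$ and each $2^{2T}$ block of $P_1$ with the identical block of $Q$ at zero cost reduces the remaining cost to $\sum_{i=1}^{|A|}\EDIT(\AG_1(a_i), \AG_2(c_i))$, where $c_i$ is the vector paired with $a_i$ by the window. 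Lemmas~\ref{sat} and~\ref{unsat} bound each summand by $E_u$ in general and by $E_s$ if the pair is orthogonal, so $\EDIT(P_1,Q) \leq |A|\cdot E_u = X$ unconditionally, and $\EDIT(P_1,Q) \leq (|A|-1)E_u + E_s = X-(E_u-E_s)$ whenever $(a_{i^{*}}, b_{j^{*}}) \in A\times B$ is an orthogonal pair and I pick the window aligning them.

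For the matching lower bound in the non-orthogonal case, I fix an arbitrary contiguous substring $Q \subseteq P_2$ and any optimal alignment of $P_1$ with $Q$. Writing $P_1 = p_0 g_1 p_1 g_2 \cdots g_k p_k$ with $k = |A|$, $g_i = \AG_1(a_i)$, $p_0 = p_k = 2^T$, and $p_i = 2^{2T}$ for $1 \leq i \leq k-1$, the alignment partitions $Q$ into $2k+1$ contiguous pieces $q_0, w_1, q_1, \ldots, w_k, q_k$, and restricting the alignment to each piece yields
\[
\EDIT(P_1, Q) \;\geq\; \sum_{i=0}^{k}\EDIT(p_i, q_i) + \sum_{i=1}^{k}\EDIT(g_i, w_i).
\]
Under the no-orthogonal hypothesis neither $A$ nor $B$ contains the zero vector (else it would be orthogonal to everything), so for every $i$ and every gadget vector $c \in B \cup \{f\}$ we have $a_i \cdot c \geq 1$, and Lemma~\ref{unsat} gives $\EDIT(\AG_1(a_i), \AG_2(c)) = E_u$. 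It therefore suffices to prove $\sum_i \EDIT(p_i, q_i) + \sum_i \EDIT(g_i, w_i) \geq k\cdot E_u$, which combined with the upper bound yields $\PAT(P_1, P_2) = X$.

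The sum bound is established by a trade-off accounting across the partition, made possible by the scale separation $l_0 \ll l_1 \ll l_2 \ll T = 1000 d\cdot t$. When $w_i$ is a truncation of a single $\AG_2(c)$ with $m_i$ of $c$'s symbols pushed into an adjacent $q_j$, the triangle inequality gives $\EDIT(g_i, w_i) \geq E_u - m_i$, and each migrated symbol is a non-$2$ that adds $1$ to $\EDIT(p_j, q_j)$, so the potential loss cancels exactly. When $w_i$ absorbs symbols from a $2^{2T}$ inter-gadget block of $P_2$, each such $2$ must be deleted or substituted (since $g_i$ contains no $2$'s), contributing $1$ per symbol to the cost. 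Any gross misalignment in which a single $w_i$ spans two complete $\AG_2$ gadgets or an entire $2^{2T}$ block incurs $\Theta(T)$ cost, which by the choice of $T$ strictly dominates the total savings available anywhere else and hence cannot be optimal. Summing the per-gadget bounds and taking the minimum over $Q$ completes the lower bound. The main obstacle throughout is the boundary subcase analysis, where one has to verify that the trade-off between $\EDIT(g_i, w_i)$ and $\EDIT(p_j, q_j)$ really does cancel to exactly $E_u$ per gadget: the tiered parameters $l_0, l_1, l_2, T$ in the gadget design are engineered precisely to make this accounting tight.
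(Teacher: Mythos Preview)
Your upper bound argument is correct and essentially identical to the paper's Lemma~\ref{pat_sat}: slide a window of $|A|$ consecutive $\AG_2'$-gadgets so that the chosen orthogonal pair lines up, and apply Lemmas~\ref{sat} and~\ref{unsat} gadget by gadget.

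For the lower bound, you take a genuinely different route from the paper. You cut $Q$ into pieces $q_0,w_1,q_1,\ldots,w_k,q_k$ according to the optimal alignment and then attempt a local trade--off accounting between $\EDIT(g_i,w_i)$ and the neighbouring $\EDIT(p_j,q_j)$. The paper instead builds a bipartite \emph{alignment graph} on the gadgets of $P_1$ and $P_2$, placing an edge whenever some symbol of $\AG_1(a)$ is aligned with some symbol of $\AG_2(b)$, and then bounds the contribution of each connected component: a component with one $\AG_1$-vertex and one $\AG_2$-vertex yields $E_u$ directly via Lemma~\ref{unsat}; any component that is larger, or that touches the boundary of $Q$, forces at least one full $2^{T}$ or $2^{2T}$ block to be consumed by non-$2$ symbols, giving cost $\geq T > E_u$ per extra gadget. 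This sidesteps the per-piece bookkeeping entirely.

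Your approach is plausible but, as written, it has a real gap precisely where you flag ``the main obstacle.'' The triangle-inequality step only gives $\EDIT(g_i,w_i)\geq E_u - \EDIT(w_i,\AG_2(c))$, and when $w_i$ both drops $m_i$ symbols of $\AG_2(c)$ \emph{and} absorbs $r_i$ symbols $2$, this is $E_u-m_i-r_i$, not $E_u-m_i$. Your separate observation that each absorbed $2$ costs $1$ inside $\EDIT(g_i,w_i)$ cannot simply be added to the triangle bound; one needs the extra monotonicity fact that deleting all $2$'s from $w_i$ can only decrease $\EDIT(g_i,\cdot)$ (since $g_i$ contains no $2$), which then yields $\EDIT(g_i,w_i)\geq \EDIT(g_i,w_i')\geq E_u-m_i$. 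You also do not verify that the ``migrated'' symbols charged to different $q_j$'s are disjoint, nor do you make precise which configurations count as ``gross misalignment'' and why they cost $\Theta(T)$ (e.g.\ $w_i$ could be a suffix of one $\AG_2$, a full $2^{2T}$ block, and a prefix of the next $\AG_2$; or $w_i$ could lie entirely inside a $2$-block). All of this can be repaired, but it is not done in your text; the paper's connected-component argument is what lets one avoid this case analysis altogether.
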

\begin{proof}
	Follows from Lemmas \ref{pat_sat} and \ref{pat_unsat} below.
\end{proof}

\begin{lemma}
	\label{pat_sat}
	If there are two orthogonal vectors, one from $A$, another from $B$, then
	$$
		\PAT(P_1,P_2)\leq X-(E_u-E_s)=X-2.
	$$
\end{lemma}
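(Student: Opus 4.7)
The plan is to exhibit one explicit contiguous subsequence $x$ of $P_2$ and show that $\EDIT(P_1,x)\le X-(E_u-E_s)$; by Definition~\ref{pattern} this immediately bounds $\PAT(P_1,P_2)$.

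Let $a^*\in A$ and $b^*\in B$ be the guaranteed orthogonal pair. Write $P_1=\AG_1'(a_1)\circ\cdots\circ\AG_1'(a_{|A|})$, and let $i^*$ be the index with $a_{i^*}=a^*$; let $j^*$ be the index of $b^*$ in the listing of $B$ used in $P_2$. I would take $x$ to be the concatenation of the $|A|$ consecutive $\AG_2'(\cdot)$ blocks of $P_2$ starting at block number $|A|+j^*-i^*$. This window is chosen so that its $i^*$-th block is exactly $\AG_2'(b^*)$; the flanking padding $\bigcirc_{i=1}^{|A|-1}\AG_2'(f)$ on each side of $P_2$ is precisely what makes the chosen starting index fall in the valid range for every $i^*\in\{1,\dots,|A|\}$ and $j^*\in\{1,\dots,|B|\}$, which a quick arithmetic check on indices verifies. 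The remaining $|A|-1$ blocks of $x$ are each of the form $\AG_2'(c)$ for some $c\in\{0,1\}^d$ (either $c=f$ or $c\in B$).

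With $x$ fixed, the key estimate is the sub-additivity of edit distance under concatenation: writing $x=x_1\circ\cdots\circ x_{|A|}$ with $x_i$ the $i$-th $\AG_2'(\cdot)$ block,
$$
\EDIT(P_1,x)\le \sum_{i=1}^{|A|}\EDIT\bigl(\AG_1'(a_i),x_i\bigr).
$$
Inside each summand I would match the $2^T$ prefix and suffix of $\AG_1'(a_i)$ identically with those of $x_i$ (legal because the symbol $2$ never appears inside any $\AG_1$ or $\AG_2$), reducing each term to $\EDIT(\AG_1(a_i),\AG_2(c_i))$, where $c_i$ is the underlying vector of $x_i$.

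To finish, I would apply Lemma~\ref{sat} to the $i=i^*$ term, using $a_{i^*}\cdot c_{i^*}=a^*\cdot b^*=0$, to bound it by $E_s$; and I would use the explicit transformation given in the first paragraph of the proof of Lemma~\ref{unsat} (which works for \emph{any} pair of vectors, not just non-orthogonal ones) to bound each of the remaining $|A|-1$ terms by $E_u$. Summing yields $(|A|-1)E_u+E_s=|A|\cdot E_u-(E_u-E_s)=X-(E_u-E_s)$, as required. The only non-routine step is the index-range check confirming existence of the chosen window; everything else is straightforward bookkeeping on the sub-additivity inequality.
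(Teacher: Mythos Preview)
Your proposal is correct and follows essentially the same approach as the paper: choose a window of $|A|$ consecutive $\AG_2'$ blocks in $P_2$ so that $\AG_1'(a^*)$ lines up with $\AG_2'(b^*)$, then bound $\EDIT(P_1,x)$ block-by-block using Lemma~\ref{sat} on the aligned orthogonal pair and the $E_u$ upper bound from Lemma~\ref{unsat} on the rest. Your write-up is in fact more detailed than the paper's own argument (you make the window index and the sub-additivity step explicit, and you correctly note that the $E_u$ upper bound in Lemma~\ref{unsat} does not require non-orthogonality).
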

\begin{proof}
	Let $a \in A$ and $b \in B$ be vectors such that $a \cdot b=0$.

	We can choose a contiguous subsequence $s$ of $P_2$ consisting of
	a sequence of $|A|$ vector gadgets $\VG_2'$ 
	such that $s$ has the following property: 
	transforming the vector gadgets $\VG_1'$ from $P_1$ and their corresponding
	vector gadgets $\VG_2'$ from $s$ into the same sequence 
	one by one as per Theorem \ref{th_edit_orth},
	we achieve a cost smaller than the upper bound. 
	We use the fact that at least one
	transformation is cheap because $a \cdot b=0$
	and we choose $s$ so that $\VG_1'(a)$ and $\VG_2'(b)$
	get transformed into the same sequence.
\end{proof}

\begin{lemma}
	\label{pat_unsat}
	If there are no two orthogonal vectors, one from $A$, another from $B$, then $$\PAT(P_1,P_2)=X.$$
\end{lemma}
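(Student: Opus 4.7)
The plan is to prove both inequalities $\PAT(P_1,P_2)\le X$ and $\PAT(P_1,P_2)\ge X$. Throughout I would assume without loss of generality that neither $A$ nor $B$ contains the zero vector (otherwise an orthogonal pair exists trivially). Under this assumption, $a\cdot f=\sum_j a_j\ge 1$ for every $a\in A$, so under the hypothesis of the lemma every pair $(a,c)$ with $a\in A$ and $c\in B\cup\{f\}$ is non-orthogonal, and hence $\EDIT(\AG_1(a),\AG_2(c))=E_u$ by Lemma~\ref{unsat}.

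For the upper bound, let $x$ be any $|A|$ consecutive $\AG_2'$ blocks inside $P_2$; this is possible since $P_2$ contains $|B|+2|A|-2\ge |A|$ such blocks thanks to the $f$-padding. Transform $P_1$ into $x$ by pairing $\AG_1'(a_i)$ with the $i$-th $\AG_2'$ block of $x$: the two $2^T$ pads of each block align exactly at cost $0$, and the inner piece contributes exactly $E_u$ by Lemma~\ref{unsat}. Summing over the $|A|$ pairs gives a transformation of cost $|A|\cdot E_u=X$, so $\PAT(P_1,P_2)\le X$.

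For the matching lower bound, fix any contiguous $x\subseteq P_2$ and consider an optimal alignment of $P_1$ with $x$, viewed as a monotone partial matching on positions. Split $P_1=Y_1\cdots Y_{|A|}$ with $Y_i=\AG_1'(a_i)$; by monotonicity this induces a contiguous split $x=x_1\cdots x_{|A|}$ so that the restriction of the alignment to each pair $(Y_i,x_i)$ is itself a valid alignment, and the total cost decomposes additively: $\EDIT(P_1,x)=\sum_i c_i\ge\sum_i \EDIT(Y_i,x_i)$. It thus suffices to prove the per-piece bound $\EDIT(\AG_1'(a_i),x_i)\ge E_u$ for every $i$, which sums to $|A|\cdot E_u=X$.

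The main obstacle is establishing this per-piece bound uniformly in $x_i$. I would exploit the $2$-block skeleton: the symbol $2$ appears nowhere inside any $\AG_1$ or $\AG_2$ gadget, and $T=1000d\cdot t$ dominates every other parameter, so any alignment that fails to match the two $2^T$ pads of $Y_i$ against $2$-runs within $x_i$ incurs $\Omega(T)\gg E_u$ cost. A case split on $|x_i|$ then finishes: when $\bigl||Y_i|-|x_i|\bigr|$ already exceeds $E_u$, the length gap forces $\EDIT(Y_i,x_i)>E_u$; otherwise $|x_i|$ is comparable to $|Y_i|$, the canonical $2$-block alignment is forced, and $x_i$ up to cheap boundary corrections encloses a single $\AG_2'(c)$ block with $c\in B\cup\{f\}$, so the inner lower bound from Lemma~\ref{unsat} (using $a_i\cdot c\ge 1$) gives $\EDIT(Y_i,x_i)\ge E_u$. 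The most delicate subcase is when $x_i$ straddles the boundary between two adjacent $\AG_2'$ gadgets of $P_2$: the internal $2^{2T}$ pad between them anchors the alignment and one must argue that some complete $\AG_2'(c)$ block is effectively enclosed, keeping the $\ge E_u$ bound. A length-budget contradiction (if the total edit distance were below $X$ then no individual $|x_i|$ could be pathological) rules out degenerate splits, completing the argument.
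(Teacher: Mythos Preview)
Your plan is sound and would yield a correct proof, but it is organised differently from the paper. The paper does \emph{not} split the optimal substring $x$ along the $Y_i=\AG_1'(a_i)$ boundaries. Instead it builds a bipartite ``alignment graph'' on the \emph{inner} gadgets: vertices are the $\AG_1(a)$ blocks in $P_1$ and the $\AG_2(b)$ blocks in $P_2$, with an edge whenever the optimal transformation aligns at least one symbol between them. It then argues per connected component: a component containing $r\ge 1$ vertices from the $P_1$ side contributes at least $r\cdot E_u$. The cases are (i) a single $\AG_1(a)$ touching two or more $\AG_2$ blocks forces cost $\ge 2T>E_u$ from the $2$-buffer between them; (ii) a single $\AG_1(a)$ touching exactly one $\AG_2(b)$ reduces, after replacing ``align to a $2$'' by ``delete'', to Lemma~\ref{unsat} (with a short extra argument when $\AG_2(b)$ is clipped by the boundary of $x$); (iii) an isolated $\AG_1(a)$ costs $|\AG_1|>E_u$; and (iv) a component with $r>1$ vertices on the $P_1$ side already pays $(r-1)\cdot 2T>r\cdot E_u$ on the intervening $2^{2T}$ buffers. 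Summing over components gives $\ge |A|\cdot E_u=X$.

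Your decomposition trades case~(iv) away (you always cut at $Y_i$ boundaries), but in exchange you must prove the uniform per-piece bound $\EDIT(\AG_1'(a_i),x_i)\ge E_u$ for an \emph{arbitrary} substring $x_i$ of $P_2$, which is exactly the part you label ``main obstacle'' and only sketch. That bound is true, and the cleanest way to finish it is essentially the paper's cases (i)--(iii) transplanted: count how many distinct $\AG_2$ blocks the inner $\AG_1(a_i)$ touches inside $x_i$ (zero, one, or $\ge 2$), use $T\gg |\AG_1|\gg E_u$ for the $0$ and $\ge 2$ cases, and for the single-block case replace alignments to $2$ by deletions and invoke Lemma~\ref{unsat}; the ``straddling'' situation you worry about then falls under the $\ge 2$ case (cost $\ge 2T-|\AG_1|>E_u$) or the clipped-boundary sub-case. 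Your final ``length-budget contradiction'' is not actually needed once the per-piece bound is established unconditionally. In short: both routes work and rest on the same two ingredients (Lemma~\ref{unsat} and $T$ dominating everything), but the paper's component-wise accounting is a bit cleaner because it never has to reason about arbitrary substrings $x_i$.
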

\begin{proof}
	Consider a graph $(X_1 \cup X_2,E)$ with vertices
	$x_1(a) \in X_1$, $a \in A$, $x_2(b) \in X_2$, $b \in B$. We also add $2|A|-2$ copies of $x_2(f)$ to set $X_2$ corresponding to $2|A|-2$ vectors $f$ in sequence $P_2$.
	Consider an optimal transformation of $P_1$ and a subsequence of $P_2$
	into the same sequence according to Definition \ref{pattern}.
	We connect two vertices $x_1(a)$ and $x_2(b)$
	if and only if $\VG_1(a)$ and $\VG_2(b)$ have an alignment in the transformation.

	We want to claim that every vector gadget $\VG_1(a)$ from $P_1$ contributes
	a cost of at least $E_u$ to the final cost of $\PAT(P_1,P_2)$. This will 
	give $\PAT(P_1,P_2)\geq X$. We consider the connected components of the graph.
	We will show that a connected component that has $r\geq 1$ vertices from $X_1$, contributes
	$\geq r \cdot E_u$ to the final cost of $\PAT(P_1,P_2)$. From the case analysis below we will
	see that these contributions for different connected components are separate. Therefore, 
	by summing up the contributions for all the connected components, we get
	$\PAT(P_1,P_2)\geq |A|\cdot E_u=X$.

	Consider a connected component of the graph with at least one vertex from $X_1$. 
	We examine several cases.

	{\bf Case 1.} The connected component has only one
	vertex from $X_1$. Let $x_1(a)$ be the vertex. 

	{\bf Case 1.1.} $x_1(a)$ is connected to more 
	than one vertex. In this case, $\VG_1(a)$ induces a cost of at least $2T>E_u$
	(this cost is induced by symbols of type $5$).
	
	{\bf Case 1.2.} $x_1(a)$ (corresponding to vector gadget $\VG_1(a)$) 
	is connected to only one vertex $x_2(b)$ (corresponding to vector gadget $\VG_2(b)$).
	Let $x$ be a contiguous substring of $P_2$ that achieves 
	the minimum of $\EDIT(P_1,x)$ (see Definition \ref{pattern}).

	{\bf Case 1.2.1.} The vector gadget 
	$\VG_2(b)$ is fully contained in the substring $x$. 
	We claim that the contribution from symbols in the sequences $\VG_1(a)$ and $\VG_2(b)$ is at least $\EDIT(\VG_1(a),\VG_2(b))$. This is sufficient because we know that $\EDIT(\VG_1(a),\VG_2(b)) \geq E_u$ from Theorem \ref{th_edit_orth}.
	If no symbol in $\VG_1(a)$ or $\VG_2(b)$ is aligned with a symbol of type $5$, the claim follows directly by applying Theorem \ref{th_edit_orth}.
	Otherwise, every symbol that is aligned with a symbol of type $5$ contributes cost $1$ to the final cost. The contribution from symbols in the sequences $\VG_1(a)$ and $\VG_2(b)$ is at least $\EDIT(\VG_1(a),\VG_2(b))$ because we can transform the sequences $\VG_1(a)$ and $\VG_2(b)$ into the same sequence by first deleting the symbols that are aligned with symbols of type $5$ (every such alignment contributes cost $1$) and then transforming the remainders of the sequences $\VG_1(a)$ and $\VG_2(b)$ into the same sequence.

	{\bf Case 1.2.2.} The complement of Case 1.2.1.
	We need to consider this case because of the following reason. We could potentially
	achieve a contribution of $\VG_1(a)$ to $\PAT(P_1,P_2)$ that is smaller than $E_u$ 
	by transforming $\VG_1(a)$ and a \emph{contiguous
	substring} of $\VG_2(b)$ into the same string (instead of transforming
	$\VG_1(a)$ and $\VG_2(b)$ into the same string). In the next
	paragraph we show that this cannot happen.

	$\VG_2(b)$ shares symbols with $x$ and is not fully contained in $x$. 
	$\VG_2(b)$ must be the left-most (right-most, resp.) vector
	gadget in $x$ but then $T$ left-most (right-most, resp.) symbols of type $5$ 
	of $\VG'_1(a)$ induce a cost of at least $T>E_u$ since the symbols of type $5$ cannot be preserved and must be substituted or deleted.

	{\bf Case 1.3.} $x_1(a)$ is connected to no vertex. We get that
	$\VG_1(a)$ induces cost of at least $|\VG_1(a)|>E_u$.

	{\bf Case 2.} The connected component has $r>1$
	vertices $x_1(a)$ from $X_1$. In this case, the cost
	induced by the vector gadgets $\VG_1(a)$ corresponding to the vertices from $X_1$ 
	in the connected component is at least $(r-1)\cdot 2T>r\cdot E_u$ (this cost is induced by symbols of type $2$).
	
	This finishes the argument that $\PAT(P_1,P_2)\geq X$.

	It remains to argue that we can achieve cost $X$ (to show that $\PAT(P_1,P_2)\leq X$)
	and it can be done  as in Lemma \ref{pat_sat}.
\end{proof}

\subsection{Reducing $\PAT$ to $\EDIT$}
 \label{section_edit}
\label{ss:edit}

We set $P_2':=P_2$ and $P_1':=6^{|P_2'|}P_16^{|P_2'|}$. Remember that $|A|=|B|$.

\begin{theorem}
	\label{hardness_bound}
	Let  $Y:=2\cdot |P_2'|+|A|\cdot E_u$. If there are no two orthogonal vectors,
	then $\EDIT(P_1',P_2')=Y$; otherwise  $\EDIT(P_1',P_2')\leq Y-(E_u-E_s)=Y-2$.
\end{theorem}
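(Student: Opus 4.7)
The plan is to show the cleaner identity $\EDIT(P_1',P_2') = 2|P_2'| + \PAT(P_1,P_2)$; the theorem then follows immediately from the previous theorem (which evaluates $\PAT(P_1,P_2)$ in the two cases).

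For the upper bound, I would pick a contiguous substring $x^\ast$ of $P_2$ realizing $\PAT(P_1,P_2)=\EDIT(P_1,x^\ast)$, write $P_2=\alpha\,x^\ast\,\beta$, and exhibit an explicit transformation of $P_1'=3^{|P_2'|}P_1\,3^{|P_2'|}$ into $P_2'$: use the left block of $3$'s to produce $\alpha$ (substitute $|\alpha|$ of them and delete the remaining $|P_2'|-|\alpha|$, for a total cost of $|P_2'|$); apply an optimal edit sequence from $P_1$ to $x^\ast$, of cost $\PAT(P_1,P_2)$; and symmetrically use the right block of $3$'s to produce $\beta$ at cost $|P_2'|$. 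This gives $\EDIT(P_1',P_2')\le 2|P_2'|+\PAT(P_1,P_2)$.

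For the lower bound, fix any optimal alignment between $P_1'$ and $P_2'$. The symbol $3$ does not appear in $P_2'$, so every symbol of the two $3$-blocks of $P_1'$ contributes at least $1$ to the cost (it is either deleted, or substituted for a mismatched $P_2'$ symbol). This already accounts for $2|P_2'|$. Let $a$, $b$, $c$ be the number of matched pairs in the alignment coming from, respectively, the left $3$-block, the middle $P_1$, and the right $3$-block of $P_1'$; let $b'$ be the number of mismatches among the $b$ pairs in the middle. By monotonicity of the alignment, the $P_2'$-indices of the $b$ middle-pair matches lie in a contiguous range, determining a substring $x$ of $P_2$ of length $|x|\ge b$; the $a$ left matches lie strictly to the left of $x$ and the $c$ right matches strictly to the right, so $a+|x|+c\le |P_2'|$. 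A direct accounting of deletions, insertions, and substitutions yields
\[
\EDIT(P_1',P_2')\;=\;3|P_2'|+|P_1|-a-2b-c+b'.
\]
The alignment restricted to $P_1$ and $x$ is a valid edit sequence, so $|P_1|+|x|-2b+b'\ge \EDIT(P_1,x)$. Substituting and using $a+c\le |P_2'|-|x|$ collapses the expression to $2|P_2'|+\EDIT(P_1,x)\ge 2|P_2'|+\PAT(P_1,P_2)$.

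Combining the two bounds gives $\EDIT(P_1',P_2')=2|P_2'|+\PAT(P_1,P_2)$, and plugging in the two cases from the previous theorem ($\PAT=X$ when no orthogonal pair exists, $\PAT\le X-(E_u-E_s)$ otherwise) yields exactly the two desired statements. The main obstacle is the accounting in the lower bound: one must keep insertions, deletions, and substitutions separate, correctly identify that every $3$ always costs exactly $1$ (so its contribution is independent of whether it is matched), and combine the ordering constraints on $a,b,c$ with the $\EDIT$-lower-bound on the middle block so that everything telescopes to a clean $\PAT$-expression rather than an $\EDIT(P_1,P_2)$-expression.
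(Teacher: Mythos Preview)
Your proposal is correct and follows essentially the same approach as the paper. The paper also splits the cost into the contribution of the two $3$-blocks (each $3$ is necessarily deleted or substituted, giving $2|P_2'|$) and the contribution of the middle block $P_1$, which it asserts is at least $\PAT(P_1,P_2')$; you have simply made this second assertion rigorous via the explicit $a,b,c,b'$ accounting and packaged both directions into the single identity $\EDIT(P_1',P_2')=2|P_2'|+\PAT(P_1,P_2)$, which the paper leaves implicit across two short lemmas.
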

\begin{proof} Follows from Lemmas \ref{edit_sat} and \ref{edit_unsat} below. \end{proof}

\begin{lemma}
	\label{edit_sat}
	If there are two orthogonal vectors, then $$\EDIT(P_1',P_2')\leq Y-(E_u-E_s)=Y-2.$$
\end{lemma}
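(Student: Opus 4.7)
The plan is to reduce this to the pattern-matching bound of Lemma \ref{pat_sat} by exploiting the fact that the symbol $3$ does not occur anywhere in $P_2'$, so the $2|P_2'|$ padding symbols on either side of $P_1$ in $P_1'$ can be used flexibly: either deleted or substituted to match content of $P_2'$ that lies outside the optimal ``window.'' I will use the common-target formulation from Observation \ref{no_insertion}: it suffices to exhibit a single sequence $z$ together with deletion/substitution operations that transform both $P_1'$ and $P_2'$ into $z$ at total cost at most $Y-(E_u-E_s)$.

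Concretely, by Lemma \ref{pat_sat} there exists a contiguous substring $x$ of $P_2$ with $\EDIT(P_1,x) \le X-(E_u-E_s)$. Write $P_2 = u\,x\,v$, and let $z'$ be a common target sequence witnessing $\EDIT(P_1,x)$, so there are deletion/substitution operations converting $P_1 \to z'$ at cost $c_1$ and $x \to z'$ at cost $c_2$ with $c_1+c_2 \le X-(E_u-E_s)$. Then I set
\[
z \;:=\; u\,z'\,v.
\]
From $P_2' = u\,x\,v$, I transform into $z$ by leaving $u$ and $v$ fixed and applying the $c_2$ operations to turn $x$ into $z'$, for a total cost of $c_2$. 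From $P_1' = 3^{|P_2'|}\,P_1\,3^{|P_2'|}$, I substitute the leftmost $|u|$ padding $3$'s into the symbols of $u$ (cost $|u|$), delete the remaining $|P_2'|-|u|$ left-side $3$'s (cost $|P_2'|-|u|$), apply the $c_1$ operations on $P_1$ to obtain $z'$, then delete $|P_2'|-|v|$ right-side $3$'s and substitute the last $|v|$ padding $3$'s to obtain $v$, for a total cost of $2|P_2'| + c_1$.

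Summing the two sides gives total cost $2|P_2'| + c_1 + c_2 \le 2|P_2'| + X - (E_u-E_s) = Y - (E_u-E_s)$, which is the desired inequality. The only slightly delicate point is to check that $|z| = |u|+|z'|+|v| \le |P_2'|$ so that going from $P_1'$ to $z$ (and from $P_2'$ to $z$) via deletions and substitutions only is feasible; this is immediate since $|z'| \le |x|$ and $|u|+|x|+|v|=|P_2'|$. I do not expect any genuine obstacle here: the argument is essentially bookkeeping once the substring $x$ from Lemma \ref{pat_sat} is in hand, and the padding was chosen precisely so that this construction goes through.
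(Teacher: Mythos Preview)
Your argument is correct and is essentially the same as the paper's: both use the substring $x$ from Lemma~\ref{pat_sat}, align $P_1$ with $x$ at cost at most $X-(E_u-E_s)$, and pay exactly $2|P_2'|$ for handling the $3$-padding against the leftover prefix $u$ and suffix $v$ of $P_2'$. The only cosmetic difference is that the paper substitutes $u$ and $v$ into $3$'s on the $P_2'$ side (and deletes the excess $3$'s from $P_1'$), whereas you substitute the $3$'s into $u$ and $v$ on the $P_1'$ side; the cost accounting is identical either way.
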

\begin{proof}
	We transform $P_1$ and a subsequence of $P_2'$ into the same sequence as in Lemma \ref{pat_sat}.
	We replace the remaining prefix and suffix of $P_2'$ with symbols of type $6$ and
	delete the excess of symbols of type $6$ from $P_1'$.
\end{proof}

\begin{lemma}
	\label{edit_unsat}
	If there are no two orthogonal vectors, then $$\EDIT(P_1',P_2')=Y.$$
\end{lemma}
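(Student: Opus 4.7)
My plan is to establish the equality by proving $\EDIT(P_1',P_2') \leq Y$ and $\EDIT(P_1',P_2') \geq Y$ separately. For the upper bound, I would adapt the construction of Lemma \ref{edit_sat} to the non-orthogonal case: Lemma \ref{pat_unsat} supplies a contiguous substring $x$ of $P_2=P_2'$ with $\EDIT(P_1,x)=\PAT(P_1,P_2)=X$; transform $P_1$ and $x$ into a common string at cost $X$, and then handle each of the $2|P_2'|$ symbols $3$ in the flanking blocks of $P_1'$ by either substituting it with a symbol of the prefix/suffix of $P_2'$ lying outside $x$ (cost $1$, since $3\notin P_2'$) or deleting it (cost $1$), for an added cost of exactly $2|P_2'|$, giving the total $Y$.

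For the lower bound, I would view any optimal transformation as a monotonic partial alignment between positions of $P_1'$ and $P_2'$ via Observation \ref{no_insertion}, and split its cost into a flanking contribution and a middle contribution. Since the symbol $3$ never occurs in $P_2'$, each of the $2|P_2'|$ flanking occurrences of $3$ in $P_1'$ is either substituted or deleted at unit cost, contributing exactly $2|P_2'|$. Monotonicity forces the positions of $P_2'$ aligned with the middle block $P_1$ of $P_1'$ to lie in a contiguous window $x := P_2'[i{+}1..j]$ for some $0 \leq i \leq j \leq |P_2'|$; restricting the transformation to $P_1$ and $x$ then yields a valid deletion/substitution transformation of that pair into a common string, whose cost is at least $\EDIT(P_1,x)\geq \PAT(P_1,P_2)=X$, where the last equality invokes Lemma \ref{pat_unsat} together with the no-orthogonal-pair hypothesis. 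Deletions of symbols of $P_2'$ outside $x$ that are unaligned with the $3$-blocks only add nonnegative slack, so $\EDIT(P_1',P_2')\geq 2|P_2'|+X=Y$, matching the upper bound.

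The main obstacle I anticipate is the accounting in the lower bound: one must verify that the flanking contribution of exactly $2|P_2'|$ and the middle contribution of at least $\EDIT(P_1,x)$ can be cleanly separated, without double-counting deletions of positions in $P_2'[i{+}1..j]$ nor over-crediting substitutions of $3$s in $P_1'$ against positions in $P_2'[1..i]\cup P_2'[j{+}1..|P_2'|]$. Once the monotonicity argument justifies the clean decomposition $S_L \subseteq [1,i]$, $S_M \subseteq [i{+}1,j]$, $S_R \subseteq [j{+}1,|P_2'|]$ of the positions aligned with the left-$3$s, middle $P_1$, and right-$3$s respectively, the bounds add up directly to $Y$.
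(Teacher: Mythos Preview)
Your proposal is correct and follows essentially the same approach as the paper's proof: separate the cost into the $2|P_2'|$ contribution from the flanking $3$-blocks (each symbol $3$ must be deleted or substituted since $3\notin P_2'$) and a middle contribution of at least $\PAT(P_1,P_2')=X$ from $P_1$, invoking Lemma~\ref{pat_unsat}. Your write-up is simply more explicit than the paper's about the monotonicity argument that pins the symbols of $P_2'$ aligned with $P_1$ into a contiguous window $x$, which is exactly what justifies the inequality $\text{(middle cost)}\geq \EDIT(P_1,x)\geq \PAT(P_1,P_2')$ that the paper states without further comment.
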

\begin{proof}
	We can easily check that $\EDIT(P_1',P_2')\leq Y$ as in Lemma \ref{edit_sat}. It remains to prove the opposite inequality.

	$P_1'$ contains $2|P_2'|$ symbols of type $6$. Those will incur
	a cost of at least $2|P_2'|$. $P_1'$ has the remaining subsequence
	$P_1$, which will incur cost at least $\PAT(P_1,P_2')$.
	Using Lemma \ref{pat_unsat}, we finish the proof.
\end{proof}

As a result, we get the following theorem.
\begin{theorem}
	If $\EDIT$ can be computed in time 
	$O(n^{2-\delta})$ for some $\delta>0$
	on two sequences of length $n$ over an 
	alphabet of size $7$, then the $\OVT$ with $|A|=|B|=N$ and $A,B\subseteq \{0,1\}^d$
	can be solved in time $d^{O(1)}\cdot N^{2-\delta}$.
\end{theorem}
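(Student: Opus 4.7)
The plan is simply to wire together the reductions already developed. Given an instance of $\OVT$ with $A,B \subseteq \{0,1\}^d$ and $|A|=|B|=N$, I would first construct the sequences $P_1'$ and $P_2'$ of Section \ref{section_edit} over the alphabet $\{0,1,2,3\}$ (size $4$, as claimed). By Theorem \ref{hardness_bound}, the value $\EDIT(P_1', P_2')$ equals $Y = 2|P_2'| + |A| \cdot E_u$ if no orthogonal pair exists, and is at most $Y - (E_u - E_s) < Y$ otherwise, so a single edit-distance computation on $(P_1', P_2')$, combined with a comparison against the threshold $Y$ (which depends only on $N$ and $d$ and can be computed in $\poly(d,\log N)$ time), decides $\OVT$.

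The only substantive check is to bound $n := \max(|P_1'|, |P_2'|)$ by $N \cdot \poly(d)$, so that the assumed $O(n^{2-\delta})$ algorithm for $\EDIT$ runs in time $d^{O(1)} N^{2-\delta}$. Tracing the parameters: $l_0 = O(d)$, $l_1 = O(d^2)$, $l_2 = O(d^3)$, each coordinate gadget has length $O(l_1)$, so $|L|=|R|=|D| = d \cdot O(l_1) = O(d^3)$, and each vector gadget $\AG_k$ has length $O(l_2) = O(d^3)$. Then $t = O(d^3)$ and $T = 1000 d \cdot t = O(d^4)$, so each padded gadget $\AG_k'$ has length $O(d^4)$. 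The sequence $P_1$ is a concatenation of $N$ padded gadgets, giving $|P_1| = O(N d^4)$, and $P_2$ is a concatenation of $|B| + 2(|A|-1) = O(N)$ padded gadgets, giving $|P_2| = |P_2'| = O(N d^4)$. Finally $|P_1'| = |P_1| + 2|P_2'| = O(N d^4)$, so $n = O(N d^4) = N \cdot \poly(d)$.

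Plugging into the assumed algorithm yields runtime $O(n^{2-\delta}) = O((N \cdot d^{O(1)})^{2-\delta}) = d^{O(1)} \cdot N^{2-\delta}$ as required; the construction of $P_1', P_2'$ is a low-order $O(n)$ additive term. I do not expect any real obstacle here: all the mathematical content sits in Lemmas~\ref{sat} and~\ref{unsat} and in Theorem~\ref{hardness_bound}, and this final statement is essentially a corollary. The only points demanding care are (i) confirming the alphabet used throughout is exactly $\{0,1,2,3\}$ (symbols $2$ appear in the padding of $\AG_k'$ and symbol $3$ in the outer padding of $P_1'$), and (ii) tracking that every parameter is bounded polynomially in $d$, so that the polynomial overhead in $d$ does not eat into the $N^{2-\delta}$ savings.
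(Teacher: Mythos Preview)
Your proposal is correct and follows the same approach as the paper: invoke Theorem~\ref{hardness_bound} and observe that the constructed strings have length $N\cdot\poly(d)$ over the alphabet $\{0,1,2,3\}$. The paper's own proof is a one-liner (``follows immediately from Theorem~\ref{hardness_bound} by having $|A|=|B|$''), so your version is actually more thorough in spelling out the length bounds and the alphabet check, but there is no substantive difference.
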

\begin{proof}
The proof follows immediately from Theorem~\ref{hardness_bound}.
\end{proof} 
\section{Acknowledgments}
The authors thank Amir Abboud, Karl Bringmann, Sepideh Mahabadi, Ludwig Schmidt and the reviewers for providing helpful comments. This work was supported by an IBM PhD Fellowship, grants from the NSF, the MADALGO center, and the Simons Investigator award.
 
	\bibliographystyle{alpha}

\end{document}